\newcommand{\abstractinenglishname}{Abstract}
\newcommand{\keywordsportugues}{Palavras-chave}
\newcommand{\keywordsenglishname}{Keywords}
\renewenvironment{abstract}{%
        \begin{center}
	\begin{minipage}{14cm}
	{\textbf{\abstractname:}}
}{
        \end{minipage}
	\end{center}
}
\newenvironment{keywordsenglish}{
        \def\abstractname{\emph{\keywordsenglishname}}
	\begin{abstract}
}{
        \end{abstract}
}
\newcommand{\cclass}[1]{\ensuremath{\mathord{\rm #1}}} %from Ben Hescott! :)
\newcommand{\rsCol}{\ruleset{Col}}
\newcommand{\ruleset}[1]{\textsc{#1}}
\newcommand{\outcomeClass}[1]{\ensuremath{\mathcal{#1}}}
\newcommand{\outP}{\outcomeClass{P}}
\newcommand{\outN}{\outcomeClass{N}}
\newcommand{\outL}{\outcomeClass{L}}
\newcommand{\outR}{\outcomeClass{R}}
\newcommand{\zeroGame}{\text{\Large 0}}
\newcommand{\impartialGameSet}[1]{\ensuremath{{}^{*\!}\{\ #1\ \} }}
\newcommand{\gameSet}[2]{\ensuremath{\left\{\ #1\ \middle|\ #2\ \right\}}}
\newcommand{\rsBCL}{\ruleset{B2CL}}
\newtheorem{theorem}{Theorem}
\title {Col is PSPACE-complete on Triangular Grids}
\author[1]{Kyle Burke} 
\affil[1]{Florida Southern College, Lakeland, FL}
\author[2]{Craig Tennenhouse}
\affil[2]{University of New England, Biddeford, ME}
\begin{document}

\maketitle
\vspace{6pt}

\begin{abstract}
We demonstrate that \rsCol{} is \cclass{PSPACE}-complete on triangular grid graphs via a reduction from \ruleset{Bounded Two-Player Constraint Logic}.  This is the most structured graph family that \rsCol{} is known to be computationally hard for.   
\end{abstract}

\begin{keywordsenglish}
combinatorial game, computational complexity, grid graphs, Col
\end{keywordsenglish}

\vspace{6pt}

\section{Introduction}

\rsCol{} is a combinatorial game played on graphs where vertices are either uncolored or colored blue or red, and no pair of adjacent vertices are both blue or both red. A turn for the two players, named Blue and Red, consists of painting an uncolored vertex their color, such that it meets the proper-coloring restriction mentioned above \cite{WinningWays:2001}\footnote{A playable version of the game is available at \url{https://kyleburke.info/DB/combGames/colTriangularGrid.html}.}.

The computational complexity of \rsCol{} has an interesting history.  It was erroneously listed as \cclass{PSPACE}-complete in the very popular Winning Ways text \cite{WinningWays:2001}.   (See page 224 of the 2001 printing.) %TODO: get the older version too!  
This confusion continued until at least 2009, when the not-yet-proven hardness was referenced in \cite{cincotti2009three}.

\rsCol{} was first actually shown to be computationally (\cclass{PSPACE}) hard in 2015 in \cite{conf/isaac/FennerGMST15}---impressively on uncolored graphs---finally resolving this outstanding problem and retroactively confirming the previously cited result.  In 2018, \rsCol{} was shown to be \cclass{PSPACE}-complete on planar graphs (though with colored vertices) \cite{BurkeHearn2018}.

When we restrict the structure on instances, the computational complexity of a problem can never increase.  Demonstrating hardness on varying subfamilies is worthwhile and comes with new challenges.  In this paper, we determine the computational hardness of \rsCol{} on a hexagonal board, a common topology for board games.

\section{Col on Triangular Grids}

\tikzstyle{blueStone}=[draw, circle,inner sep=2, minimum size =20 pt, line width = 1pt, draw=black, fill=blue]
\tikzstyle{blueCan} = [draw, circle, inner sep = 2, minimum size = 20pt, line width = 1pt, pattern=north west lines, pattern color=blue]
\tikzstyle{redStone}=[draw, circle,inner sep=2, minimum size =20 pt, line width = 1pt, draw=black, fill=red]
\tikzstyle{redCan} = [draw, circle, inner sep = 2, minimum size = 20pt, line width = 1pt, pattern=north east lines, pattern color=red]
\tikzstyle{empty}=[draw, circle,inner sep=2, minimum size =20 pt, line width = 1pt, draw=black, fill=white]
\tikzstyle{nix}=[draw, circle, inner sep=2, minimum size =20 pt, line width = 1pt, draw=black, fill=white, path picture={ 
  \draw[black]
(path picture bounding box.south east) -- (path picture bounding box.north west) (path picture bounding box.south west) -- (path picture bounding box.north east);
}]

\subsection{Reduction Overview}

We are reducing from \ruleset{Bounded Two-Player Constraint Logic} (\rsBCL), which is \cclass{PSPACE}-complete by way of \ruleset{Positive CNF} \cite{DBLP:books/daglib/0023750}.  In order to complete the reduction, it is enough to implement gadgets for only five types of \rsBCL\ vertices: AND, OR, CHOICE, SPLIT, and VARIABLE \cite{DBLP:books/daglib/0023750}, as well as a GOAL.  Additionally, because of the strict topology of a triangular grid, we will need WIRE gadgets to connect our pieces.

Although a subset of those gadgets is adequate to reduce from \ruleset{Positive CNF}, \rsBCL\ automatically grants us the planarity necessary for our grid, as the above listed vertices are enough to resolve crossovers \cite{DBLP:books/daglib/0023750}.

The \rsCol\ position will have an even number of variable gadgets that should be played first by both players.  Then Blue will fill out the rest of the circuitry while Red plays vertices reserved for them.  If Blue is able to play sufficiently on all gadgets and play (twice) on the GOAL gadget, then they will have one remaining move after Red takes their last possible turn.

We continue by describing each of the gadgets.

\subsection{Main Gadgets}

Each of our gadgets will exist in a single 8 $\times$ 8 ``tile'' of vertices.  To describe all of our gadgets, we will show figures with the vertices drawn in one of the six states shown in the legend in Figure \ref{fig:legend}.  The first three symbols, $(a)$, $(b)$, and $(c)$, are used to show the actual color of that vertex.  The remaining three, $(d)$, $(e)$, and $(f)$, are possible configurations of empty vertices to help make things more clear for the reader, depending on what is nearby.  We don't bother with a separate notation for a vertex where either player can play.  Thus, a vertex drawn as empty, $(a)$, may actually be in one of the restricted states, $d$, $e$, or $f$, depending on the color of neighboring vertices. 

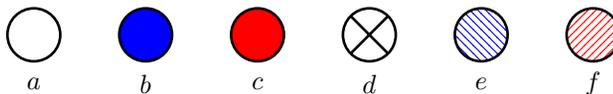
\begin{figure}[h!]
    \begin{center}
    \begin{tikzpicture}[node distance = .75cm, minimum size = .5cm, inner sep = .07cm, ultra thick]
        \node[empty] (a) [label = below:{$a$}] {};
        \node[blueStone] (b) [right=of a, label = below:{$b$}] {};
        \node[redStone] (c) [right=of b, label = below:{$c$}] {};
        \node[nix] (d) [right=of c, label = below:{$d$}] {};
        \node[blueCan] (e) [right=of d, label = below:{$e$}] {};
        \node[redCan] (f) [right=of e, label = below:{$f$}] {};
    \end{tikzpicture}
    \end{center}
    \caption{Legend.  $a$: empty vertex, $b$: vertex colored blue, $c$: vertex colored red, $d$: empty vertex adjacent to both blue and red; no one can play, $e$: empty vertex adjacent to red; Blue can play, $f$: empty vertex adjacent to blue; Red can play.}
    \label{fig:legend}
\end{figure}

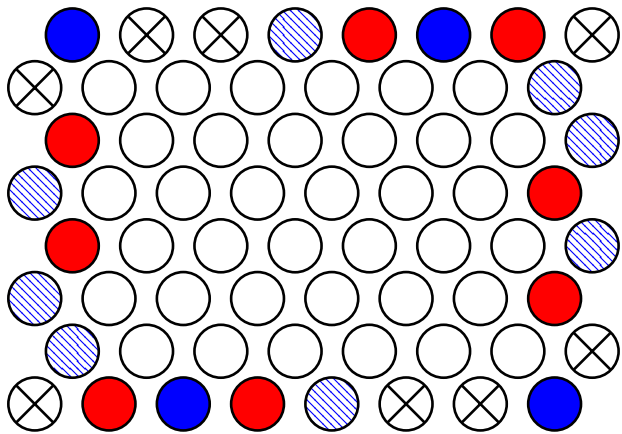
\begin{figure}[h!]
    \begin{center}
    \begin{tikzpicture}[node distance = .25cm, minimum size = .5cm, inner sep = .07cm, ultra thick]
        \node[blueStone] (t00) {};
        \node[nix] (t10) [right=of t00] {};
        \node[nix] (t20) [right=of t10] {};
        \node[blueCan] (t30) [right=of t20] {};
        \node[redStone] (t40) [right=of t30] {};
        \node[blueStone] (t50) [right=of t40] {};
        \node[redStone] (t60) [right=of t50] {};
        \node[nix] (t70) [right=of t60] {};

        \node[nix] (t01) at (-.5, -.7) {};
        \node[empty] (t11) [right=of t01] {};
        \node[empty] (t21) [right=of t11] {};
        \node[empty] (t31) [right=of t21] {};
        \node[empty] (t41) [right=of t31] {};
        \node[empty] (t51) [right=of t41] {};
        \node[empty] (t61) [right=of t51] {};
        \node[blueCan] (t71) [right=of t61] {};
        
        \node[redStone] (t02) at (0, -1.4) {};
        \node[empty] (t12) [right=of t02] {};
        \node[empty] (t22) [right=of t12] {};
        \node[empty] (t32) [right=of t22] {};
        \node[empty] (t42) [right=of t32] {};
        \node[empty] (t52) [right=of t42] {};
        \node[empty] (t62) [right=of t52] {};
        \node[blueCan] (t72) [right=of t62] {};

        \node[blueCan] (t03) at (-.5, -2.1) {};
        \node[empty] (t13) [right=of t03] {};
        \node[empty] (t23) [right=of t13] {};
        \node[empty] (t33) [right=of t23] {};
        \node[empty] (t43) [right=of t33] {};
        \node[empty] (t53) [right=of t43] {};
        \node[empty] (t63) [right=of t53] {};
        \node[redStone] (t73) [right=of t63] {};
        
        \node[redStone] (t04) at (0, -2.8) {};
        \node[empty] (t14) [right=of t04] {};
        \node[empty] (t24) [right=of t14] {};
        \node[empty] (t34) [right=of t24] {};
        \node[empty] (t44) [right=of t34] {};
        \node[empty] (t54) [right=of t44] {};
        \node[empty] (t64) [right=of t54] {};
        \node[blueCan] (t74) [right=of t64] {};

        \node[blueCan] (t05) at (-.5, -3.5) {};
        \node[empty] (t15) [right=of t05] {};
        \node[empty] (t25) [right=of t15] {};
        \node[empty] (t35) [right=of t25] {};
        \node[empty] (t45) [right=of t35] {};
        \node[empty] (t55) [right=of t45] {};
        \node[empty] (t65) [right=of t55] {};
        \node[redStone] (t75) [right=of t65] {};

        \node[blueCan] (t06) at (0, -4.2) {};
        \node[empty] (t16) [right=of t06] {};
        \node[empty] (t26) [right=of t16] {};
        \node[empty] (t36) [right=of t26] {};
        \node[empty] (t46) [right=of t36] {};
        \node[empty] (t56) [right=of t46] {};
        \node[empty] (t66) [right=of t56] {};
        \node[nix] (t76) [right=of t66] {};

        \node[nix] (t07) at (-.5, -4.9) {};
        \node[redStone] (t17) [right=of t07] {};
        \node[blueStone] (t27) [right=of t17] {};
        \node[redStone] (t37) [right=of t27] {};
        \node[blueCan] (t47) [right=of t37] {};
        \node[nix] (t57) [right=of t47] {};
        \node[nix] (t67) [right=of t57] {};
        \node[blueStone] (t77) [right=of t67] {};
    \end{tikzpicture}\end{center}
    \caption{Gadget Template, showing the state of all vertices on the edges of the gadget.}
    \label{fig:gadgetTemplate}
\end{figure}

The template for the boundary of these tiles is shown in Figure \ref{fig:gadgetTemplate}.  All gadgets will have this configuration along the boundary, with some condsiderations:

\begin{itemize}
    \item Some of the unplayable vertices ($d$) have that status due to neighbors in that gadget.
    \item The remaining unplayable vertices have that status because of neighbors in the adjacent gadgets.  Figure \ref{fig:templateAndNeighbors} shows a template with vertices along the boundary of adjacent gadgets.
    \item Four of the vertices where Blue can play ($e$)---two adjacent in each of the top right and bottom left---will be made either unplayable or already blue depending on the gadget.  Each gadget will have to ensure this.
    \item The remaining four vertices where Blue can play---near the middle of each boundary edge---are the interfaces for this gadget to interact with neighboring gadgets.  Either Blue will be able to play there, or they will be unplayable as needed by the gadget.
\end{itemize}

Since the template has rotational symmetry, all gadgets can be given as specified or rotated 180-degrees as needed in the construction.

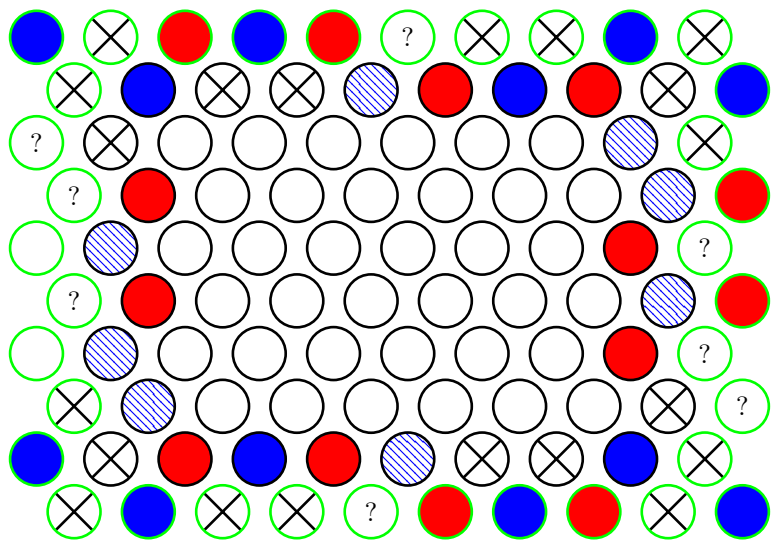
\begin{figure}[h!]
    \begin{center}
    \begin{tikzpicture}[node distance = .25cm, minimum size = .5cm, inner sep = .07cm, ultra thick]
        \node[nix, draw=green] (t0n1) at (-.5, .7) {};
        \node[redStone, draw=green] (t1n1) [right=of t0n1] {};
        \node[blueStone, draw=green] (t2n1) [right=of t1n1] {};
        \node[redStone, draw=green] (t3n1) [right=of t2n1] {};
        \node[empty, draw=green] (t4n1) [right=of t3n1] {?};
        \node[nix, draw=green] (t5n1) [right=of t4n1] {};
        \node[nix, draw=green] (t6n1) [right=of t5n1] {};
        \node[blueStone, draw=green] (t7n1) [right=of t6n1] {};
        \node[nix, draw=green] (t8n1) [right=of t7n1] {};
        \node[blueStone, draw=green] (tn1n1) [left=of t0n1] {};
    
        \node[blueStone] (t00) {};
        \node[nix] (t10) [right=of t00] {};
        \node[nix] (t20) [right=of t10] {};
        \node[blueCan] (t30) [right=of t20] {};
        \node[redStone] (t40) [right=of t30] {};
        \node[blueStone] (t50) [right=of t40] {};
        \node[redStone] (t60) [right=of t50] {};
        \node[nix] (t70) [right=of t60] {};
        \node[blueStone, draw=green] (t80) [right=of t70] {};
        \node[nix, draw=green] (tn10) [left=of t00] {};

        \node[nix] (t01) at (-.5, -.7) {};
        \node[empty] (t11) [right=of t01] {};
        \node[empty] (t21) [right=of t11] {};
        \node[empty] (t31) [right=of t21] {};
        \node[empty] (t41) [right=of t31] {};
        \node[empty] (t51) [right=of t41] {};
        \node[empty] (t61) [right=of t51] {};
        \node[blueCan] (t71) [right=of t61] {};
        \node[nix, draw=green] (t81) [right=of t71] {};
        \node[empty, draw=green] (tn11) [left=of t01] {?};
        
        \node[redStone] (t02) at (0, -1.4) {};
        \node[empty] (t12) [right=of t02] {};
        \node[empty] (t22) [right=of t12] {};
        \node[empty] (t32) [right=of t22] {};
        \node[empty] (t42) [right=of t32] {};
        \node[empty] (t52) [right=of t42] {};
        \node[empty] (t62) [right=of t52] {};
        \node[blueCan] (t72) [right=of t62] {};
        \node[redStone, draw=green] (t82) [right=of t72] {};
        \node[empty, draw=green] (tn12) [left=of t02] {?};

        \node[blueCan] (t03) at (-.5, -2.1) {};
        \node[empty] (t13) [right=of t03] {};
        \node[empty] (t23) [right=of t13] {};
        \node[empty] (t33) [right=of t23] {};
        \node[empty] (t43) [right=of t33] {};
        \node[empty] (t53) [right=of t43] {};
        \node[empty] (t63) [right=of t53] {};
        \node[redStone] (t73) [right=of t63] {};
        \node[empty, draw=green] (t83) [right=of t73] {?};
        \node[empty, draw=green] (tn13) [left=of t03] {};
        
        \node[redStone] (t04) at (0, -2.8) {};
        \node[empty] (t14) [right=of t04] {};
        \node[empty] (t24) [right=of t14] {};
        \node[empty] (t34) [right=of t24] {};
        \node[empty] (t44) [right=of t34] {};
        \node[empty] (t54) [right=of t44] {};
        \node[empty] (t64) [right=of t54] {};
        \node[blueCan] (t74) [right=of t64] {};
        \node[redStone, draw=green] (t84) [right=of t74] {};
        \node[empty, draw=green] (tn14) [left=of t04] {?};

        \node[blueCan] (t05) at (-.5, -3.5) {};
        \node[empty] (t15) [right=of t05] {};
        \node[empty] (t25) [right=of t15] {};
        \node[empty] (t35) [right=of t25] {};
        \node[empty] (t45) [right=of t35] {};
        \node[empty] (t55) [right=of t45] {};
        \node[empty] (t65) [right=of t55] {};
        \node[redStone] (t75) [right=of t65] {};
        \node[empty, draw=green] (t85) [right=of t75] {?};
        \node[empty, draw=green] (tn15) [left=of t05] {};

        \node[blueCan] (t06) at (0, -4.2) {};
        \node[empty] (t16) [right=of t06] {};
        \node[empty] (t26) [right=of t16] {};
        \node[empty] (t36) [right=of t26] {};
        \node[empty] (t46) [right=of t36] {};
        \node[empty] (t56) [right=of t46] {};
        \node[empty] (t66) [right=of t56] {};
        \node[nix] (t76) [right=of t66] {};
        \node[empty, draw=green] (t86) [right=of t76] {?};
        \node[nix, draw=green] (tn16) [left=of t06] {};

        \node[nix] (t07) at (-.5, -4.9) {};
        \node[redStone] (t17) [right=of t07] {};
        \node[blueStone] (t27) [right=of t17] {};
        \node[redStone] (t37) [right=of t27] {};
        \node[blueCan] (t47) [right=of t37] {};
        \node[nix] (t57) [right=of t47] {};
        \node[nix] (t67) [right=of t57] {};
        \node[blueStone] (t77) [right=of t67] {};
        \node[nix, draw=green] (t87) [right=of t77] {};
        \node[blueStone, draw=green] (tn17) [left=of t07] {};

        \node[blueStone, draw=green] (t08) at (0, -5.6) {};
        \node[nix, draw=green] (t18) [right=of t08] {};
        \node[nix, draw=green] (t28) [right=of t18] {};
        \node[empty, draw=green] (t38) [right=of t28] {?};
        \node[redStone, draw=green] (t48) [right=of t38] {};
        \node[blueStone, draw=green] (t58) [right=of t48] {};
        \node[redStone, draw=green] (t68) [right=of t58] {};
        \node[nix, draw=green] (t78) [right=of t68] {};
        \node[blueStone, draw=green] (t88) [right=of t78] {};
        \node[nix, draw=green] (tn18) [left=of t08] {};
    \end{tikzpicture}\end{center}
    \caption{Gadget Template with neighboring circles in green.  Question marks indicate that blue may or may not be able to play at those spots in neighboring gadgets.}
    \label{fig:templateAndNeighbors}
\end{figure}

We describe our gadgets with figures to ease verification of the construction.  Each will show:
\begin{itemize}
    \item Neighboring (T, F) vertex pairs: these are vertices where only Blue can play (except in the variable gadget).  Blue will need to play one vertex in each of these pairs or they will lose the game.  The F vertex will be adjacent to the T vertex in the next pair, so the next T can only be chosen if the prior one was.  It can never benefit Blue to play in an F instead of the paired T vertex, so they will always continue with T vertices if possible.
    \item Neighboring $o$ vertex pairs: these are outputs in a gadget, also only playable by Blue.  The output signal if False if Blue plays in the $o$ vertex on the boundary, and True if they play in the $o$ vertex next to the boundary.
    \item Neighboring $i$ vertex pairs: these act nearly the same as the outputs, except that the True $i$ vertex is the one along the boundary and the False is the one adjacent to it.
\end{itemize}

We first describe the VARIABLE gadget in Figure \ref{fig:variableGadget}.  These are the only main gadgets where Red can play; both players can play at the T and F and only Blue can play at the two $o$.  The player who first plays in each VARIABLE should choose the T-vertex, as that is equivalent to marking the associated variable with their truth value.  After all VARIABLE gadgets have been claimed, Blue will continue to play in gadgets while Red must play elsewhere.  Blue must play twice in each VARIABLE gadget, so they will either choose the T and propagate a True signal, or Red will play on the T and Blue must then play on the F, propagating False from that variable.

\begin{figure}[h!]
    \begin{center}
    \begin{tikzpicture}[node distance = .25cm, minimum size = .5cm, inner sep = .07cm, ultra thick]
        \node[blueStone] (t00) {};
        \node[nix] (t10) [right=of t00] {};
        \node[nix] (t20) [right=of t10] {};
        \node[empty] (t30) [right=of t20] {$o$};
        \node[redStone] (t40) [right=of t30] {};
        \node[blueStone] (t50) [right=of t40] {};
        \node[redStone] (t60) [right=of t50] {};
        \node[nix] (t70) [right=of t60] {};

        \node[nix] (t01) at (-.5, -.7) {};
        \node[empty] (t11) [right=of t01] {};
        \node[blueStone] (t21) [right=of t11] {};
        \node[redStone] (t31) [right=of t21] {};
        \node[empty] (t41) [right=of t31] {$o$};
        \node[empty] (t51) [right=of t41] {};
        \node[empty] (t61) [right=of t51] {};
        \node[blueStone] (t71) [right=of t61] {};
        
        \node[redStone] (t02) at (0, -1.4) {};
        \node[empty] (t12) [right=of t02] {};
        \node[empty] (t22) [right=of t12] {};
        \node[empty] (t32) [right=of t22] {};
        \node[empty] (t42) [right=of t32] {F};
        \node[empty] (t52) [right=of t42] {T};
        \node[empty] (t62) [right=of t52] {};
        \node[nix] (t72) [right=of t62] {};

        \node[nix] (t03) at (-.5, -2.1) {};
        \node[blueStone] (t13) [right=of t03] {};
        \node[redStone] (t23) [right=of t13] {};
        \node[blueStone] (t33) [right=of t23] {};
        \node[empty] (t43) [right=of t33] {};
        \node[empty] (t53) [right=of t43] {};
        \node[empty] (t63) [right=of t53] {};
        \node[redStone] (t73) [right=of t63] {};
        
        \node[redStone] (t04) at (0, -2.8) {};
        \node[empty] (t14) [right=of t04] {};
        \node[empty] (t24) [right=of t14] {};
        \node[redStone] (t34) [right=of t24] {};
        \node[blueStone] (t44) [right=of t34] {};
        \node[redStone] (t54) [right=of t44] {};
        \node[blueStone] (t64) [right=of t54] {};
        \node[nix] (t74) [right=of t64] {};

        \node[nix] (t05) at (-.5, -3.5) {};
        \node[empty] (t15) [right=of t05] {};
        \node[redStone] (t25) [right=of t15] {};
        \node[empty] (t35) [right=of t25] {};
        \node[empty] (t45) [right=of t35] {};
        \node[empty] (t55) [right=of t45] {};
        \node[empty] (t65) [right=of t55] {};
        \node[redStone] (t75) [right=of t65] {};

        \node[blueStone] (t06) at (0, -4.2) {};
        \node[empty] (t16) [right=of t06] {};
        \node[empty] (t26) [right=of t16] {};
        \node[blueStone] (t36) [right=of t26] {};
        \node[redStone] (t46) [right=of t36] {};
        \node[blueStone] (t56) [right=of t46] {}; 
        \node[empty] (t66) [right=of t56] {};
        \node[nix] (t76) [right=of t66] {};

        \node[nix] (t07) at (-.5, -4.9) {};
        \node[redStone] (t17) [right=of t07] {};
        \node[blueStone] (t27) [right=of t17] {};
        \node[redStone] (t37) [right=of t27] {};
        \node[nix] (t47) [right=of t37] {};
        \node[nix] (t57) [right=of t47] {};
        \node[nix] (t67) [right=of t57] {};
        \node[blueStone] (t77) [right=of t67] {};
    \end{tikzpicture}\end{center}
    \caption{VARIABLE Gadget with the output pair ($o$) labeled.  If Blue plays first, they will play on the T, otherwise Red plays there and Blue plays on the F. } % - AREN'T THERE A COUPLE X'S THAT SHOULD BE BLUECANS HERE?}
    \label{fig:variableGadget}
\end{figure}
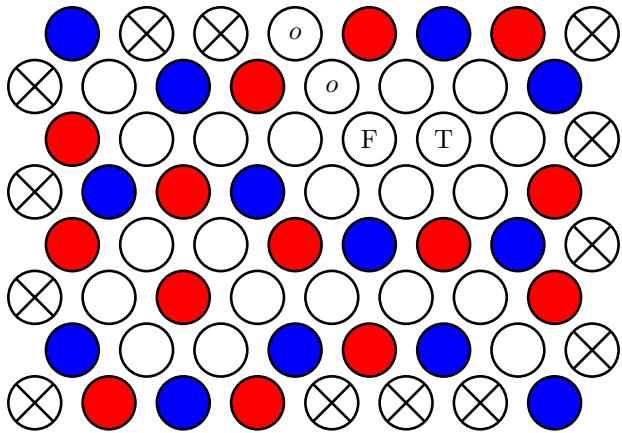

The GOAL gadget, Figure \ref{fig:goal}, is the final gadget that determines who will win the game.  In order for Blue to win, they must be able to play twice in this gadget, meaning once in the $i$ pair and once at G.  (Thus, $i$ must be True.)  If Blue cannot play at G, then Red will make the final move and win the game. 

\begin{figure}[h!]
    \begin{center}
    \begin{tikzpicture}[node distance = .25cm, minimum size = .5cm, inner sep = .07cm, ultra thick]
        \node[blueStone] (t00) {};
        \node[nix] (t10) [right=of t00] {};
        \node[nix] (t20) [right=of t10] {};
        \node[nix] (t30) [right=of t20] {};
        \node[redStone] (t40) [right=of t30] {};
        \node[blueStone] (t50) [right=of t40] {};
        \node[redStone] (t60) [right=of t50] {};
        \node[nix] (t70) [right=of t60] {};

        \node[nix] (t01) at (-.5, -.7) {};
        \node[empty] (t11) [right=of t01] {};
        \node[redStone] (t21) [right=of t11] {};
        \node[blueStone] (t31) [right=of t21] {};
        \node[empty] (t41) [right=of t31] {};
        \node[empty] (t51) [right=of t41] {};
        \node[empty] (t61) [right=of t51] {};
        \node[blueStone] (t71) [right=of t61] {};
        
        \node[redStone] (t02) at (0, -1.4) {};
        \node[empty] (t12) [right=of t02] {};
        \node[empty] (t22) [right=of t12] {};
        \node[redStone] (t32) [right=of t22] {};
        \node[empty] (t42) [right=of t32] {};
        \node[blueStone] (t52) [right=of t42] {};
        \node[empty] (t62) [right=of t52] {};
        \node[nix] (t72) [right=of t62] {};

        \node[nix] (t03) at (-.5, -2.1) {};
        \node[blueStone] (t13) [right=of t03] {};
        \node[empty] (t23) [right=of t13] {};
        \node[empty] (t33) [right=of t23] {};
        \node[blueStone] (t43) [right=of t33] {};
        \node[redStone] (t53) [right=of t43] {};
        \node[empty] (t63) [right=of t53] {};
        \node[redStone] (t73) [right=of t63] {};
        
        \node[redStone] (t04) at (0, -2.8) {};
        \node[empty] (t14) [right=of t04] {};
        \node[redStone] (t24) [right=of t14] {};
        \node[empty] (t34) [right=of t24] {};
        \node[empty] (t44) [right=of t34] {};
        \node[empty] (t54) [right=of t44] {};
        \node[blueStone] (t64) [right=of t54] {};
        \node[nix] (t74) [right=of t64] {};

        \node[nix] (t05) at (-.5, -3.5) {};
        \node[empty] (t15) [right=of t05] {};
        \node[blueStone] (t25) [right=of t15] {};
        \node[empty] (t35) [right=of t25] {};
        \node[empty] (t45) [right=of t35] {G};
        \node[empty] (t55) [right=of t45] {};
        \node[empty] (t65) [right=of t55] {};
        \node[redStone] (t75) [right=of t65] {};

        \node[blueStone] (t06) at (0, -4.2) {};
        \node[empty] (t16) [right=of t06] {};
        \node[empty] (t26) [right=of t16] {};
        \node[empty] (t36) [right=of t26] {$i$};
        \node[redStone] (t46) [right=of t36] {};
        \node[blueStone] (t56) [right=of t46] {};
        \node[empty] (t66) [right=of t56] {};
        \node[nix] (t76) [right=of t66] {};

        \node[nix] (t07) at (-.5, -4.9) {};
        \node[redStone] (t17) [right=of t07] {};
        \node[blueStone] (t27) [right=of t17] {};
        \node[redStone] (t37) [right=of t27] {};
        \node[empty] (t47) [right=of t37] {$i$};
        \node[nix] (t57) [right=of t47] {};
        \node[nix] (t67) [right=of t57] {};
        \node[blueStone] (t77) [right=of t67] {};
    \end{tikzpicture}\end{center}
    \caption{GOAL gadget: In order for Blue to win, they must play at the vertex marked G.}
    \label{fig:goal}
\end{figure}
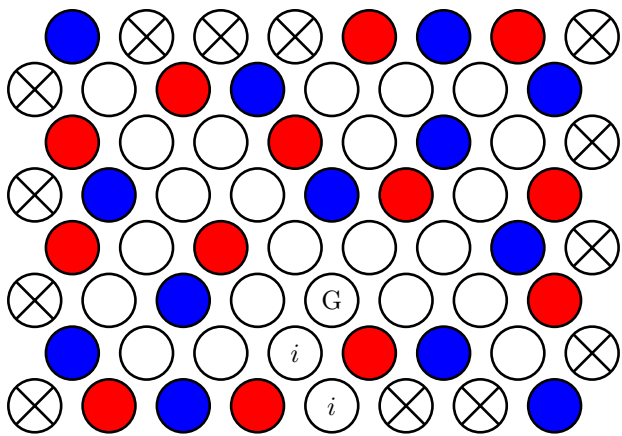

To connect gadgets, we'll need WIRE gadgets.  In these, the output can be True if and only if the input is True.  In the Vertical WIRE, Figure \ref{fig:verticalWireGadget}, Blue must play four times.  This gadget can be reversed to go down without changing the gadget at all.

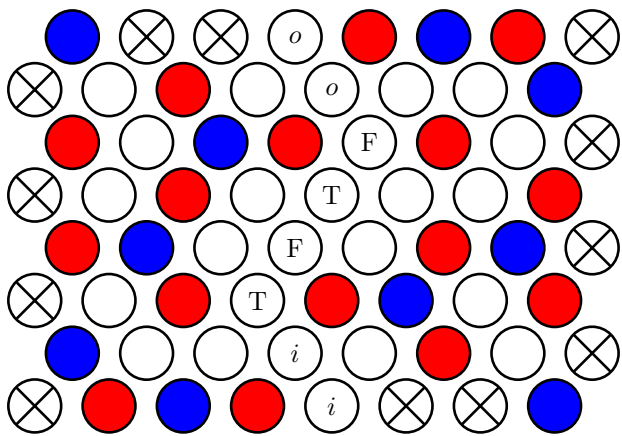
\begin{figure}[h!]
    \begin{center}
    \begin{tikzpicture}[node distance = .25cm, minimum size = .5cm, inner sep = .07cm, ultra thick]
        \node[blueStone] (t00) {};
        \node[nix] (t10) [right=of t00] {};
        \node[nix] (t20) [right=of t10] {};
        \node[empty] (t30) [right=of t20] {$o$};
        \node[redStone] (t40) [right=of t30] {};
        \node[blueStone] (t50) [right=of t40] {};
        \node[redStone] (t60) [right=of t50] {};
        \node[nix] (t70) [right=of t60] {};

        \node[nix] (t01) at (-.5, -.7) {};
        \node[empty] (t11) [right=of t01] {};
        \node[redStone] (t21) [right=of t11] {};
        \node[empty] (t31) [right=of t21] {};
        \node[empty] (t41) [right=of t31] {$o$};
        \node[empty] (t51) [right=of t41] {};
        \node[empty] (t61) [right=of t51] {};
        \node[blueStone] (t71) [right=of t61] {};
        
        \node[redStone] (t02) at (0, -1.4) {};
        \node[empty] (t12) [right=of t02] {};
        \node[blueStone] (t22) [right=of t12] {};
        \node[redStone] (t32) [right=of t22] {};
        \node[empty] (t42) [right=of t32] {F};
        \node[redStone] (t52) [right=of t42] {};
        \node[empty] (t62) [right=of t52] {};
        \node[nix] (t72) [right=of t62] {};

        \node[nix] (t03) at (-.5, -2.1) {};
        \node[empty] (t13) [right=of t03] {};
        \node[redStone] (t23) [right=of t13] {};
        \node[empty] (t33) [right=of t23] {};
        \node[empty] (t43) [right=of t33] {T};
        \node[empty] (t53) [right=of t43] {};
        \node[empty] (t63) [right=of t53] {};
        \node[redStone] (t73) [right=of t63] {};
        
        \node[redStone] (t04) at (0, -2.8) {};
        \node[blueStone] (t14) [right=of t04] {};
        \node[empty] (t24) [right=of t14] {};
        \node[empty] (t34) [right=of t24] {F};
        \node[empty] (t44) [right=of t34] {};
        \node[redStone] (t54) [right=of t44] {};
        \node[blueStone] (t64) [right=of t54] {};
        \node[nix] (t74) [right=of t64] {};

        \node[nix] (t05) at (-.5, -3.5) {};
        \node[empty] (t15) [right=of t05] {};
        \node[redStone] (t25) [right=of t15] {};
        \node[empty] (t35) [right=of t25] {T};
        \node[redStone] (t45) [right=of t35] {};
        \node[blueStone] (t55) [right=of t45] {};
        \node[empty] (t65) [right=of t55] {};
        \node[redStone] (t75) [right=of t65] {};

        \node[blueStone] (t06) at (0, -4.2) {};
        \node[empty] (t16) [right=of t06] {};
        \node[empty] (t26) [right=of t16] {};
        \node[empty] (t36) [right=of t26] {$i$};
        \node[empty] (t46) [right=of t36] {};
        \node[redStone] (t56) [right=of t46] {}; 
        \node[empty] (t66) [right=of t56] {};
        \node[nix] (t76) [right=of t66] {};

        \node[nix] (t07) at (-.5, -4.9) {};
        \node[redStone] (t17) [right=of t07] {};
        \node[blueStone] (t27) [right=of t17] {};
        \node[redStone] (t37) [right=of t27] {};
        \node[empty] (t47) [right=of t37] {$i$};
        \node[nix] (t57) [right=of t47] {};
        \node[nix] (t67) [right=of t57] {};
        \node[blueStone] (t77) [right=of t67] {};
    \end{tikzpicture}\end{center}
    \caption{Vertical WIRE Gadget with the input pair ($i$) and output pair ($o$) labeled.}
    \label{fig:verticalWireGadget}
\end{figure}

In the Horizontal WIRE gadget, Figure \ref{fig:horizontalWireGadget}, Blue must play five times.  This can also be reversed without any changes.

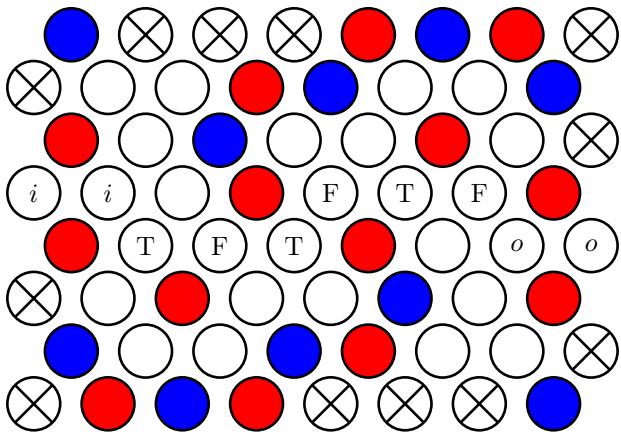
\begin{figure}[h!]
    \begin{center}
    \begin{tikzpicture}[node distance = .25cm, minimum size = .5cm, inner sep = .07cm, ultra thick]
        \node[blueStone] (t00) {};
        \node[nix] (t10) [right=of t00] {};
        \node[nix] (t20) [right=of t10] {};
        \node[nix] (t30) [right=of t20] {};
        \node[redStone] (t40) [right=of t30] {};
        \node[blueStone] (t50) [right=of t40] {};
        \node[redStone] (t60) [right=of t50] {};
        \node[nix] (t70) [right=of t60] {};

        \node[nix] (t01) at (-.5, -.7) {};
        \node[empty] (t11) [right=of t01] {};
        \node[empty] (t21) [right=of t11] {};
        \node[redStone] (t31) [right=of t21] {};
        \node[blueStone] (t41) [right=of t31] {};
        \node[empty] (t51) [right=of t41] {};
        \node[empty] (t61) [right=of t51] {};
        \node[blueStone] (t71) [right=of t61] {};
        
        \node[redStone] (t02) at (0, -1.4) {};
        \node[empty] (t12) [right=of t02] {};
        \node[blueStone] (t22) [right=of t12] {};
        \node[empty] (t32) [right=of t22] {};
        \node[empty] (t42) [right=of t32] {};
        \node[redStone] (t52) [right=of t42] {};
        \node[empty] (t62) [right=of t52] {};
        \node[nix] (t72) [right=of t62] {};

        \node[empty] (t03) at (-.5, -2.1) {$i$};
        \node[empty] (t13) [right=of t03] {$i$};
        \node[empty] (t23) [right=of t13] {};
        \node[redStone] (t33) [right=of t23] {};
        \node[empty] (t43) [right=of t33] {F};
        \node[empty] (t53) [right=of t43] {T};
        \node[empty] (t63) [right=of t53] {F};
        \node[redStone] (t73) [right=of t63] {};
        
        \node[redStone] (t04) at (0, -2.8) {};
        \node[empty] (t14) [right=of t04] {T};
        \node[empty] (t24) [right=of t14] {F};
        \node[empty] (t34) [right=of t24] {T};
        \node[redStone] (t44) [right=of t34] {};
        \node[empty] (t54) [right=of t44] {};
        \node[empty] (t64) [right=of t54] {$o$};
        \node[empty] (t74) [right=of t64] {$o$};

        \node[nix] (t05) at (-.5, -3.5) {};
        \node[empty] (t15) [right=of t05] {};
        \node[redStone] (t25) [right=of t15] {};
        \node[empty] (t35) [right=of t25] {};
        \node[empty] (t45) [right=of t35] {};
        \node[blueStone] (t55) [right=of t45] {};
        \node[empty] (t65) [right=of t55] {};
        \node[redStone] (t75) [right=of t65] {};

        \node[blueStone] (t06) at (0, -4.2) {};
        \node[empty] (t16) [right=of t06] {};
        \node[empty] (t26) [right=of t16] {};
        \node[blueStone] (t36) [right=of t26] {};
        \node[redStone] (t46) [right=of t36] {};
        \node[empty] (t56) [right=of t46] {}; 
        \node[empty] (t66) [right=of t56] {};
        \node[nix] (t76) [right=of t66] {};

        \node[nix] (t07) at (-.5, -4.9) {};
        \node[redStone] (t17) [right=of t07] {};
        \node[blueStone] (t27) [right=of t17] {};
        \node[redStone] (t37) [right=of t27] {};
        \node[nix] (t47) [right=of t37] {};
        \node[nix] (t57) [right=of t47] {};
        \node[nix] (t67) [right=of t57] {};
        \node[blueStone] (t77) [right=of t67] {};
    \end{tikzpicture}\end{center}
    \caption{Horizontal WIRE Gadget with the input pair ($i$) and output pair ($o$) labeled.}
    \label{fig:horizontalWireGadget}
\end{figure}

In the Right Turn WIRE gadget, Figure \ref{fig:rightTurnWireGadget}, Blue must play three times.  This can be reversed without change, and can be rotated 180-degrees to instead connect the top and left sides in either direction.

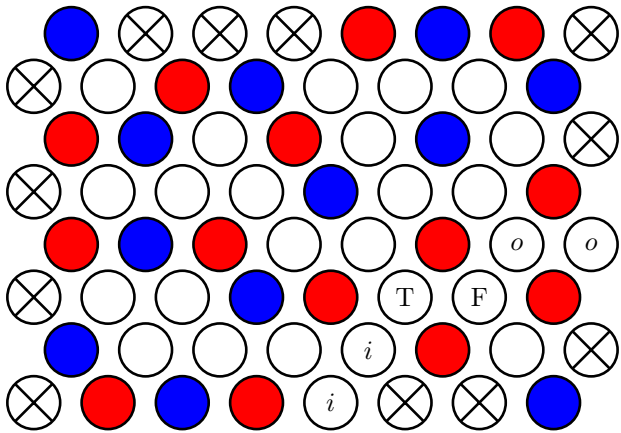
\begin{figure}[h!]
    \begin{center}
    \begin{tikzpicture}[node distance = .25cm, minimum size = .5cm, inner sep = .07cm, ultra thick]
        \node[blueStone] (t00) {};
        \node[nix] (t10) [right=of t00] {};
        \node[nix] (t20) [right=of t10] {};
        \node[nix] (t30) [right=of t20] {};
        \node[redStone] (t40) [right=of t30] {};
        \node[blueStone] (t50) [right=of t40] {};
        \node[redStone] (t60) [right=of t50] {};
        \node[nix] (t70) [right=of t60] {};

        \node[nix] (t01) at (-.5, -.7) {};
        \node[empty] (t11) [right=of t01] {};
        \node[redStone] (t21) [right=of t11] {};
        \node[blueStone] (t31) [right=of t21] {};
        \node[empty] (t41) [right=of t31] {};
        \node[empty] (t51) [right=of t41] {};
        \node[empty] (t61) [right=of t51] {};
        \node[blueStone] (t71) [right=of t61] {};
        
        \node[redStone] (t02) at (0, -1.4) {};
        \node[blueStone] (t12) [right=of t02] {};
        \node[empty] (t22) [right=of t12] {};
        \node[redStone] (t32) [right=of t22] {};
        \node[empty] (t42) [right=of t32] {};
        \node[blueStone] (t52) [right=of t42] {};
        \node[empty] (t62) [right=of t52] {};
        \node[nix] (t72) [right=of t62] {};

        \node[nix] (t03) at (-.5, -2.1) {};
        \node[empty] (t13) [right=of t03] {};
        \node[empty] (t23) [right=of t13] {};
        \node[empty] (t33) [right=of t23] {};
        \node[blueStone] (t43) [right=of t33] {};
        \node[empty] (t53) [right=of t43] {};
        \node[empty] (t63) [right=of t53] {};
        \node[redStone] (t73) [right=of t63] {};
        
        \node[redStone] (t04) at (0, -2.8) {};
        \node[blueStone] (t14) [right=of t04] {};
        \node[redStone] (t24) [right=of t14] {};
        \node[empty] (t34) [right=of t24] {};
        \node[empty] (t44) [right=of t34] {};
        \node[redStone] (t54) [right=of t44] {};
        \node[empty] (t64) [right=of t54] {$o$};
        \node[empty] (t74) [right=of t64] {$o$};

        \node[nix] (t05) at (-.5, -3.5) {};
        \node[empty] (t15) [right=of t05] {};
        \node[empty] (t25) [right=of t15] {};
        \node[blueStone] (t35) [right=of t25] {};
        \node[redStone] (t45) [right=of t35] {};
        \node[empty] (t55) [right=of t45] {T};
        \node[empty] (t65) [right=of t55] {F};
        \node[redStone] (t75) [right=of t65] {};

        \node[blueStone] (t06) at (0, -4.2) {};
        \node[empty] (t16) [right=of t06] {};
        \node[empty] (t26) [right=of t16] {};
        \node[empty] (t36) [right=of t26] {};
        \node[empty] (t46) [right=of t36] {$i$};
        \node[redStone] (t56) [right=of t46] {}; 
        \node[empty] (t66) [right=of t56] {};
        \node[nix] (t76) [right=of t66] {};

        \node[nix] (t07) at (-.5, -4.9) {};
        \node[redStone] (t17) [right=of t07] {};
        \node[blueStone] (t27) [right=of t17] {};
        \node[redStone] (t37) [right=of t27] {};
        \node[empty] (t47) [right=of t37] {$i$};
        \node[nix] (t57) [right=of t47] {};
        \node[nix] (t67) [right=of t57] {};
        \node[blueStone] (t77) [right=of t67] {};
    \end{tikzpicture}\end{center}
    \caption{Right Turn WIRE Gadget with the input pair ($i$) and output pair ($o$) labeled.}
    \label{fig:rightTurnWireGadget}
\end{figure}

In the Left Turn WIRE gadget, Figure \ref{fig:leftTurnWireGadget}, Blue must play four times.  This gadget is reversible without modifications, and can be rotated 180 degrees to connect the top and right hand sides in either direction.

\begin{figure}[h!]
    \begin{center}
    \begin{tikzpicture}[node distance = .25cm, minimum size = .5cm, inner sep = .07cm, ultra thick]
        \node[blueStone] (t00) {};
        \node[nix] (t10) [right=of t00] {};
        \node[nix] (t20) [right=of t10] {};
        \node[nix] (t30) [right=of t20] {};
        \node[redStone] (t40) [right=of t30] {};
        \node[blueStone] (t50) [right=of t40] {};
        \node[redStone] (t60) [right=of t50] {};
        \node[nix] (t70) [right=of t60] {};

        \node[nix] (t01) at (-.5, -.7) {};
        \node[empty] (t11) [right=of t01] {};
        \node[blueStone] (t21) [right=of t11] {};
        \node[empty] (t31) [right=of t21] {};
        \node[empty] (t41) [right=of t31] {};
        \node[empty] (t51) [right=of t41] {};
        \node[empty] (t61) [right=of t51] {};
        \node[blueStone] (t71) [right=of t61] {};
        
        \node[redStone] (t02) at (0, -1.4) {};
        \node[empty] (t12) [right=of t02] {};
        \node[redStone] (t22) [right=of t12] {};
        \node[empty] (t32) [right=of t22] {};
        \node[blueStone] (t42) [right=of t32] {};
        \node[redStone] (t52) [right=of t42] {};
        \node[empty] (t62) [right=of t52] {};
        \node[nix] (t72) [right=of t62] {};

        \node[empty] (t03) at (-.5, -2.1) {$o$};
        \node[empty] (t13) [right=of t03] {$o$};
        \node[empty] (t23) [right=of t13] {F};
        \node[empty] (t33) [right=of t23] {T};
        \node[empty] (t43) [right=of t33] {};
        \node[empty] (t53) [right=of t43] {};
        \node[empty] (t63) [right=of t53] {};
        \node[redStone] (t73) [right=of t63] {};
        
        \node[redStone] (t04) at (0, -2.8) {};
        \node[empty] (t14) [right=of t04] {};
        \node[empty] (t24) [right=of t14] {};
        \node[empty] (t34) [right=of t24] {F};
        \node[redStone] (t44) [right=of t34] {};
        \node[empty] (t54) [right=of t44] {};
        \node[blueStone] (t64) [right=of t54] {};
        \node[nix] (t74) [right=of t64] {};

        \node[nix] (t05) at (-.5, -3.5) {};
        \node[empty] (t15) [right=of t05] {};
        \node[blueStone] (t25) [right=of t15] {};
        \node[redStone] (t35) [right=of t25] {};
        \node[empty] (t45) [right=of t35] {T};
        \node[empty] (t55) [right=of t45] {};
        \node[empty] (t65) [right=of t55] {};
        \node[redStone] (t75) [right=of t65] {};

        \node[blueStone] (t06) at (0, -4.2) {};
        \node[empty] (t16) [right=of t06] {};
        \node[empty] (t26) [right=of t16] {};
        \node[empty] (t36) [right=of t26] {$i$};
        \node[redStone] (t46) [right=of t36] {};
        \node[blueStone] (t56) [right=of t46] {}; 
        \node[empty] (t66) [right=of t56] {};
        \node[nix] (t76) [right=of t66] {};

        \node[nix] (t07) at (-.5, -4.9) {};
        \node[redStone] (t17) [right=of t07] {};
        \node[blueStone] (t27) [right=of t17] {};
        \node[redStone] (t37) [right=of t27] {};
        \node[empty] (t47) [right=of t37] {$i$};
        \node[nix] (t57) [right=of t47] {};
        \node[nix] (t67) [right=of t57] {};
        \node[blueStone] (t77) [right=of t67] {};
    \end{tikzpicture}\end{center}
    \caption{Left Turn WIRE Gadget with the input pair ($i$) and output pair ($o$) labeled.}
    \label{fig:leftTurnWireGadget}
\end{figure}
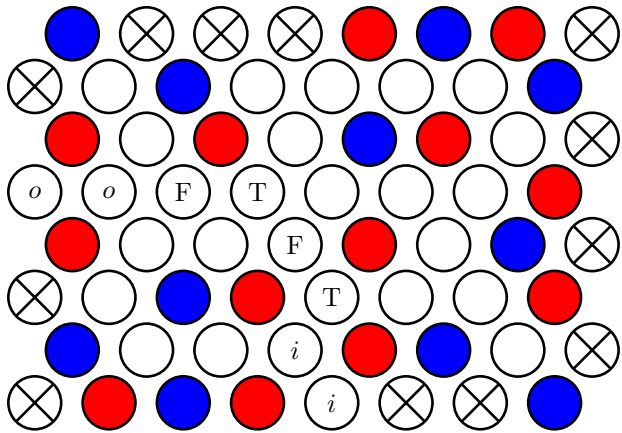

The AND gadget, Figure \ref{fig:andGadget}, takes two inputs.  The T-vertex in the center can be chosen (and the output is True) only if both inputs are True.  Blue must play six times on each of these gadgets.
\begin{figure}[h!]
    \begin{center}
    \begin{tikzpicture}[node distance = .25cm, minimum size = .5cm, inner sep = .07cm, ultra thick]
        \node[blueStone] (t00) {};
        \node[nix] (t10) [right=of t00] {};
        \node[nix] (t20) [right=of t10] {};
        \node[empty] (t30) [right=of t20] {$o$};
        \node[redStone] (t40) [right=of t30] {};
        \node[blueStone] (t50) [right=of t40] {};
        \node[redStone] (t60) [right=of t50] {};
        \node[nix] (t70) [right=of t60] {};

        \node[nix] (t01) at (-.5, -.7) {};
        \node[empty] (t11) [right=of t01] {};
        \node[empty] (t21) [right=of t11] {F};
        \node[empty] (t31) [right=of t21] {$o$};
        \node[empty] (t41) [right=of t31] {x};
        \node[empty] (t51) [right=of t41] {};
        \node[empty] (t61) [right=of t51] {};
        \node[blueStone] (t71) [right=of t61] {};
        
        \node[redStone] (t02) at (0, -1.4) {};
        \node[empty] (t12) [right=of t02] {T};
        \node[redStone] (t22) [right=of t12] {};
        \node[empty] (t32) [right=of t22] {F};
        \node[empty] (t42) [right=of t32] {T};
        \node[empty] (t52) [right=of t42] {F};
        \node[empty] (t62) [right=of t52] {};
        \node[nix] (t72) [right=of t62] {};

        \node[empty] (t03) at (-.5, -2.1) {$i_1$};
        \node[empty] (t13) [right=of t03] {$i_1$};
        \node[empty] (t23) [right=of t13] {};
        \node[empty] (t33) [right=of t23] {};
        \node[empty] (t43) [right=of t33] {};
        \node[redStone] (t53) [right=of t43] {};
        \node[empty] (t63) [right=of t53] {T};
        \node[redStone] (t73) [right=of t63] {};
        
        \node[redStone] (t04) at (0, -2.8) {};
        \node[empty] (t14) [right=of t04] {};
        \node[blueStone] (t24) [right=of t14] {};
        \node[redStone] (t34) [right=of t24] {};
        \node[blueStone] (t44) [right=of t34] {};
        \node[empty] (t54) [right=of t44] {};
        \node[empty] (t64) [right=of t54] {$i_2$};
        \node[empty] (t74) [right=of t64] {$i_2$};

        \node[nix] (t05) at (-.5, -3.5) {};
        \node[empty] (t15) [right=of t05] {};
        \node[redStone] (t25) [right=of t15] {};
        \node[empty] (t35) [right=of t25] {};
        \node[empty] (t45) [right=of t35] {};
        \node[redStone] (t55) [right=of t45] {};
        \node[empty] (t65) [right=of t55] {};
        \node[redStone] (t75) [right=of t65] {};

        \node[blueStone] (t06) at (0, -4.2) {};
        \node[empty] (t16) [right=of t06] {};
        \node[empty] (t26) [right=of t16] {};
        \node[blueStone] (t36) [right=of t26] {};
        \node[empty] (t46) [right=of t36] {};
        \node[blueStone] (t56) [right=of t46] {};
        \node[empty] (t66) [right=of t56] {};
        \node[nix] (t76) [right=of t66] {};

        \node[nix] (t07) at (-.5, -4.9) {};
        \node[redStone] (t17) [right=of t07] {};
        \node[blueStone] (t27) [right=of t17] {};
        \node[redStone] (t37) [right=of t27] {};
        \node[nix] (t47) [right=of t37] {};
        \node[nix] (t57) [right=of t47] {};
        \node[nix] (t67) [right=of t57] {};
        \node[blueStone] (t77) [right=of t67] {};
    \end{tikzpicture}\end{center}
    \caption{AND Gadget with the input pairs ($i_1$ and $i_2$), output pair ($o$), and playable pathways labeled.  Blue cannot win if they play at the `x' because it will prevent them from playing at two other locations.}
    \label{fig:andGadget}
\end{figure}
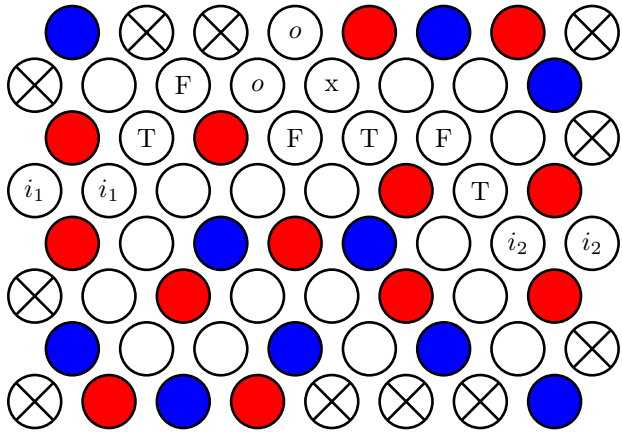

The OR gadget, Figure \ref{fig:orGadget}, also has two inputs and one output.  In the middle is a T-T-F triple instead of a pair.  If either of the inputs is True, then one of the two T's can be chosen in that triple and the output can also be True.  Blue must play six times in the gadget.

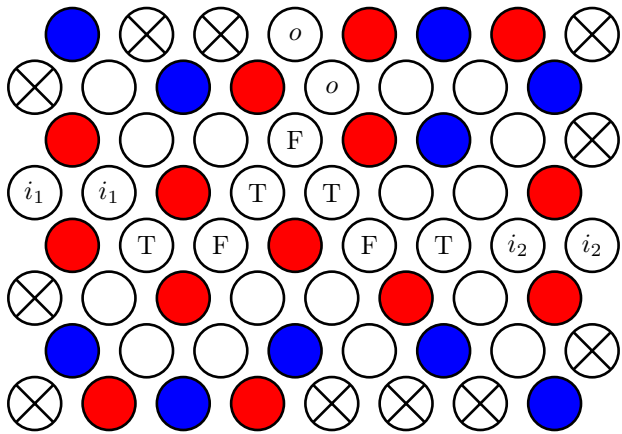
\begin{figure}[h!]
    \begin{center}
    \begin{tikzpicture}[node distance = .25cm, minimum size = .5cm, inner sep = .07cm, ultra thick]
        \node[blueStone] (t00) {};
        \node[nix] (t10) [right=of t00] {};
        \node[nix] (t20) [right=of t10] {};
        \node[empty] (t30) [right=of t20] {$o$};
        \node[redStone] (t40) [right=of t30] {};
        \node[blueStone] (t50) [right=of t40] {};
        \node[redStone] (t60) [right=of t50] {};
        \node[nix] (t70) [right=of t60] {};

        \node[nix] (t01) at (-.5, -.7) {};
        \node[empty] (t11) [right=of t01] {};
        \node[blueStone] (t21) [right=of t11] {};
        \node[redStone] (t31) [right=of t21] {};
        \node[empty] (t41) [right=of t31] {$o$};
        \node[empty] (t51) [right=of t41] {};
        \node[empty] (t61) [right=of t51] {};
        \node[blueStone] (t71) [right=of t61] {};
        
        \node[redStone] (t02) at (0, -1.4) {};
        \node[empty] (t12) [right=of t02] {};
        \node[empty] (t22) [right=of t12] {};
        \node[empty] (t32) [right=of t22] {F};
        \node[redStone] (t42) [right=of t32] {};
        \node[blueStone] (t52) [right=of t42] {};
        \node[empty] (t62) [right=of t52] {};
        \node[nix] (t72) [right=of t62] {};

        \node[empty] (t03) at (-.5, -2.1) {$i_1$};
        \node[empty] (t13) [right=of t03] {$i_1$};
        \node[redStone] (t23) [right=of t13] {};
        \node[empty] (t33) [right=of t23] {T};
        \node[empty] (t43) [right=of t33] {T};
        \node[empty] (t53) [right=of t43] {};
        \node[empty] (t63) [right=of t53] {};
        \node[redStone] (t73) [right=of t63] {};
        
        \node[redStone] (t04) at (0, -2.8) {};
        \node[empty] (t14) [right=of t04] {T};
        \node[empty] (t24) [right=of t14] {F};
        \node[redStone] (t34) [right=of t24] {};
        \node[empty] (t44) [right=of t34] {F};
        \node[empty] (t54) [right=of t44] {T};
        \node[empty] (t64) [right=of t54] {$i_2$};
        \node[empty] (t74) [right=of t64] {$i_2$};

        \node[nix] (t05) at (-.5, -3.5) {};
        \node[empty] (t15) [right=of t05] {};
        \node[redStone] (t25) [right=of t15] {};
        \node[empty] (t35) [right=of t25] {};
        \node[empty] (t45) [right=of t35] {};
        \node[redStone] (t55) [right=of t45] {};
        \node[empty] (t65) [right=of t55] {};
        \node[redStone] (t75) [right=of t65] {};

        \node[blueStone] (t06) at (0, -4.2) {};
        \node[empty] (t16) [right=of t06] {};
        \node[empty] (t26) [right=of t16] {};
        \node[blueStone] (t36) [right=of t26] {};
        \node[empty] (t46) [right=of t36] {};
        \node[blueStone] (t56) [right=of t46] {};
        \node[empty] (t66) [right=of t56] {};
        \node[nix] (t76) [right=of t66] {};

        \node[nix] (t07) at (-.5, -4.9) {};
        \node[redStone] (t17) [right=of t07] {};
        \node[blueStone] (t27) [right=of t17] {};
        \node[redStone] (t37) [right=of t27] {};
        \node[nix] (t47) [right=of t37] {};
        \node[nix] (t57) [right=of t47] {};
        \node[nix] (t67) [right=of t57] {};
        \node[blueStone] (t77) [right=of t67] {};
    \end{tikzpicture}\end{center}
    \caption{OR Gadget with the input pairs ($i_1$ and $i_2$), output pair ($o$), and playable pathways labeled.}
    \label{fig:orGadget}
\end{figure}

The CHOICE gadget, Figure \ref{fig:choiceGadget}, has one input and two outputs and has two T-F pairs with adjacent T-vertices.  If the input is True, then only one of them can be chosen to be True, so Blue will have to choose.  If the input is False, then both outputs must be False.  Blue must play seven times in each of these gadgets.

\begin{figure}[h!]
    \begin{center}
    \begin{tikzpicture}[node distance = .25cm, minimum size = .5cm, inner sep = .07cm, ultra thick]
        \node[blueStone] (t00) {};
        \node[nix] (t10) [right=of t00] {};
        \node[nix] (t20) [right=of t10] {};
        \node[nix] (t30) [right=of t20] {};
        \node[redStone] (t40) [right=of t30] {};
        \node[blueStone] (t50) [right=of t40] {};
        \node[redStone] (t60) [right=of t50] {};
        \node[nix] (t70) [right=of t60] {};

        \node[nix] (t01) at (-.5, -.7) {};
        \node[empty] (t11) [right=of t01] {};
        \node[redStone] (t21) [right=of t11] {};
        \node[empty] (t31) [right=of t21] {};
        \node[blueStone] (t41) [right=of t31] {};
        \node[empty] (t51) [right=of t41] {};
        \node[empty] (t61) [right=of t51] {};
        \node[blueStone] (t71) [right=of t61] {};
        
        \node[redStone] (t02) at (0, -1.4) {};
        \node[empty] (t12) [right=of t02] {F};
        \node[empty] (t22) [right=of t12] {T};
        \node[redStone] (t32) [right=of t22] {};
        \node[empty] (t42) [right=of t32] {};
        \node[redStone] (t52) [right=of t42] {};
        \node[empty] (t62) [right=of t52] {};
        \node[nix] (t72) [right=of t62] {};

        \node[empty] (t03) at (-.5, -2.1) {$o_1$};
        \node[empty] (t13) [right=of t03] {$o_1$};
        \node[redStone] (t23) [right=of t13] {};
        \node[empty] (t33) [right=of t23] {F};
        \node[empty] (t43) [right=of t33] {T};
        \node[empty] (t53) [right=of t43] {T};
        \node[empty] (t63) [right=of t53] {F};
        \node[redStone] (t73) [right=of t63] {};
        
        \node[redStone] (t04) at (0, -2.8) {};
        \node[empty] (t14) [right=of t04] {};
        \node[empty] (t24) [right=of t14] {};
        \node[redStone] (t34) [right=of t24] {};
        \node[empty] (t44) [right=of t34] {F};
        \node[redStone] (t54) [right=of t44] {};
        \node[empty] (t64) [right=of t54] {$o_2$};
        \node[empty] (t74) [right=of t64] {$o_2$};

        \node[nix] (t05) at (-.5, -3.5) {};
        \node[empty] (t15) [right=of t05] {};
        \node[blueStone] (t25) [right=of t15] {};
        \node[empty] (t35) [right=of t25] {};
        \node[empty] (t45) [right=of t35] {T};
        \node[empty] (t55) [right=of t45] {};
        \node[empty] (t65) [right=of t55] {};
        \node[redStone] (t75) [right=of t65] {};

        \node[blueStone] (t06) at (0, -4.2) {};
        \node[empty] (t16) [right=of t06] {};
        \node[empty] (t26) [right=of t16] {};
        \node[empty] (t36) [right=of t26] {$i$};
        \node[redStone] (t46) [right=of t36] {};
        \node[blueStone] (t56) [right=of t46] {}; 
        \node[empty] (t66) [right=of t56] {};
        \node[nix] (t76) [right=of t66] {};

        \node[nix] (t07) at (-.5, -4.9) {};
        \node[redStone] (t17) [right=of t07] {};
        \node[blueStone] (t27) [right=of t17] {};
        \node[redStone] (t37) [right=of t27] {};
        \node[empty] (t47) [right=of t37] {$i$};
        \node[nix] (t57) [right=of t47] {};
        \node[nix] (t67) [right=of t57] {};
        \node[blueStone] (t77) [right=of t67] {};
    \end{tikzpicture}\end{center}
    \caption{CHOICE Gadget with the input pair ($i$), output pairs ($o_1, o_2$), and playable pathways labeled.}
    \label{fig:choiceGadget}
\end{figure}
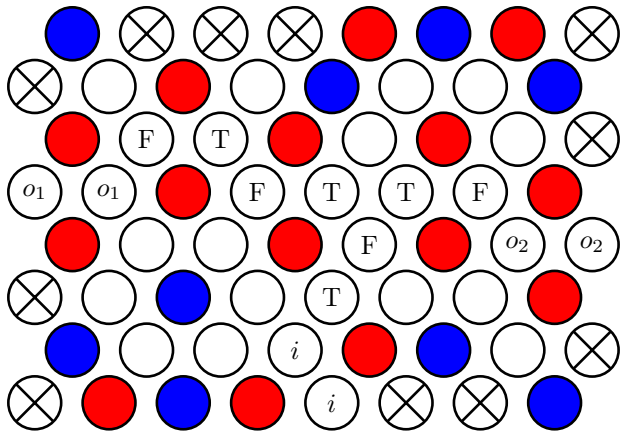

The SPLIT gadget, Figure \ref{fig:splitGadget}, also has one input and two outputs.  Both outputs can be True exactly when the input is True.  Blue must play six times in each of these gadgets.

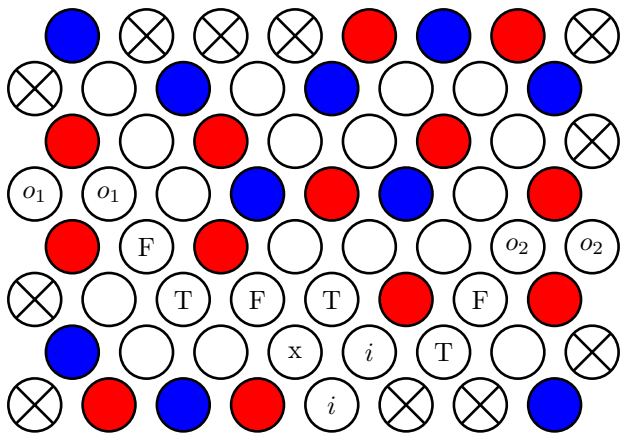
\begin{figure}[h!]
    \begin{center}
    \begin{tikzpicture}[node distance = .25cm, minimum size = .5cm, inner sep = .07cm, ultra thick]
        \node[blueStone] (t00) {};
        \node[nix] (t10) [right=of t00] {};
        \node[nix] (t20) [right=of t10] {};
        \node[nix] (t30) [right=of t20] {};
        \node[redStone] (t40) [right=of t30] {};
        \node[blueStone] (t50) [right=of t40] {};
        \node[redStone] (t60) [right=of t50] {};
        \node[nix] (t70) [right=of t60] {};

        \node[nix] (t01) at (-.5, -.7) {};
        \node[empty] (t11) [right=of t01] {};
        \node[blueStone] (t21) [right=of t11] {};
        \node[empty] (t31) [right=of t21] {};
        \node[blueStone] (t41) [right=of t31] {};
        \node[empty] (t51) [right=of t41] {};
        \node[empty] (t61) [right=of t51] {};
        \node[blueStone] (t71) [right=of t61] {};
        
        \node[redStone] (t02) at (0, -1.4) {};
        \node[empty] (t12) [right=of t02] {};
        \node[redStone] (t22) [right=of t12] {};
        \node[empty] (t32) [right=of t22] {};
        \node[empty] (t42) [right=of t32] {};
        \node[redStone] (t52) [right=of t42] {};
        \node[empty] (t62) [right=of t52] {};
        \node[nix] (t72) [right=of t62] {};

        \node[empty] (t03) at (-.5, -2.1) {$o_1$};
        \node[empty] (t13) [right=of t03] {$o_1$};
        \node[empty] (t23) [right=of t13] {};
        \node[blueStone] (t33) [right=of t23] {};
        \node[redStone] (t43) [right=of t33] {};
        \node[blueStone] (t53) [right=of t43] {};
        \node[empty] (t63) [right=of t53] {};
        \node[redStone] (t73) [right=of t63] {};
        
        \node[redStone] (t04) at (0, -2.8) {};
        \node[empty] (t14) [right=of t04] {F};
        \node[redStone] (t24) [right=of t14] {};
        \node[empty] (t34) [right=of t24] {};
        \node[empty] (t44) [right=of t34] {};
        \node[empty] (t54) [right=of t44] {};
        \node[empty] (t64) [right=of t54] {$o_2$};
        \node[empty] (t74) [right=of t64] {$o_2$};

        \node[nix] (t05) at (-.5, -3.5) {};
        \node[empty] (t15) [right=of t05] {};
        \node[empty] (t25) [right=of t15] {T};
        \node[empty] (t35) [right=of t25] {F};
        \node[empty] (t45) [right=of t35] {T};
        \node[redStone] (t55) [right=of t45] {};
        \node[empty] (t65) [right=of t55] {F};
        \node[redStone] (t75) [right=of t65] {};

        \node[blueStone] (t06) at (0, -4.2) {};
        \node[empty] (t16) [right=of t06] {};
        \node[empty] (t26) [right=of t16] {};
        \node[empty] (t36) [right=of t26] {x};
        \node[empty] (t46) [right=of t36] {$i$};
        \node[empty] (t56) [right=of t46] {T}; 
        \node[empty] (t66) [right=of t56] {};
        \node[nix] (t76) [right=of t66] {};

        \node[nix] (t07) at (-.5, -4.9) {};
        \node[redStone] (t17) [right=of t07] {};
        \node[blueStone] (t27) [right=of t17] {};
        \node[redStone] (t37) [right=of t27] {};
        \node[empty] (t47) [right=of t37] {$i$};
        \node[nix] (t57) [right=of t47] {};
        \node[nix] (t67) [right=of t57] {};
        \node[blueStone] (t77) [right=of t67] {};
    \end{tikzpicture}\end{center}
    \caption{SPLIT Gadget with the input pair ($i$), output pairs ($o_1, o_2$), and playable pathways labeled.  Blue can't win if they play at the `x', because it prevents them from making two other plays.}
    \label{fig:splitGadget}
\end{figure}

In addition to the above circuity, we also need to have spaces for Red to play while Blue is filling in the rest of the circuitry.  Let $x$ be the number of VARIABLE gadgets, $y$ be the number of vertical WIREs, $z$ be the number of horizontal WIREs, $w$ be the number of right turn WIREs, $v$ be the number of left turn WIREs, $u$ be the number of ANDs, $t$ be the number of ORs, $s$ be the number of CHOICEs, and $r$ be the number of SPLITs.  Then in total, the number of spaces we need for Red to play is:

$$1x + 4y + 5z + 3w + 4v + 6u + 6t + 7s + 6r + 1.$$

(The final 1 is for Red's move on the GOAL; recall that $x$ will be even because we have an even number of variables.)  In order to get those spaces, we use REDPLAYS gadgets, shown in Figure \ref{fig:redPlays}.  Each of these gives Red eight moves.  If the above sum comes to anything other than a multiple of 8, some of the red-playable vertices can begin colored red.

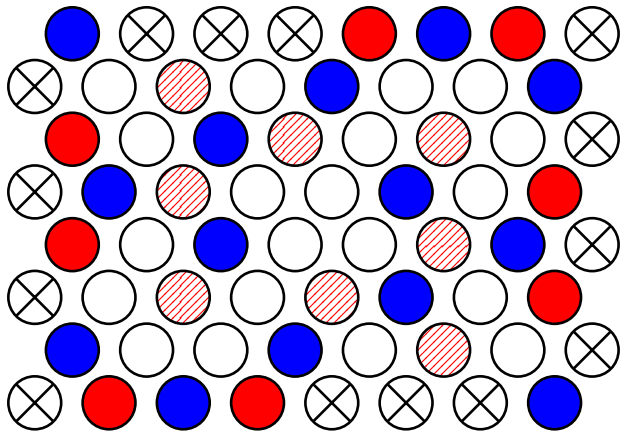
\begin{figure}[h!]
    \begin{center}
    \begin{tikzpicture}[node distance = .25cm, minimum size = .5cm, inner sep = .07cm, ultra thick]
        \node[blueStone] (t00) {};
        \node[nix] (t10) [right=of t00] {};
        \node[nix] (t20) [right=of t10] {};
        \node[nix] (t30) [right=of t20] {};
        \node[redStone] (t40) [right=of t30] {};
        \node[blueStone] (t50) [right=of t40] {};
        \node[redStone] (t60) [right=of t50] {};
        \node[nix] (t70) [right=of t60] {};

        \node[nix] (t01) at (-.5, -.7) {};
        \node[empty] (t11) [right=of t01] {};
        \node[redCan] (t21) [right=of t11] {};
        \node[empty] (t31) [right=of t21] {};
        \node[blueStone] (t41) [right=of t31] {};
        \node[empty] (t51) [right=of t41] {};
        \node[empty] (t61) [right=of t51] {};
        \node[blueStone] (t71) [right=of t61] {};
        
        \node[redStone] (t02) at (0, -1.4) {};
        \node[empty] (t12) [right=of t02] {};
        \node[blueStone] (t22) [right=of t12] {};
        \node[redCan] (t32) [right=of t22] {};
        \node[empty] (t42) [right=of t32] {};
        \node[redCan] (t52) [right=of t42] {};
        \node[empty] (t62) [right=of t52] {};
        \node[nix] (t72) [right=of t62] {};

        \node[nix] (t03) at (-.5, -2.1) {};
        \node[blueStone] (t13) [right=of t03] {};
        \node[redCan] (t23) [right=of t13] {};
        \node[empty] (t33) [right=of t23] {};
        \node[empty] (t43) [right=of t33] {};
        \node[blueStone] (t53) [right=of t43] {};
        \node[empty] (t63) [right=of t53] {};
        \node[redStone] (t73) [right=of t63] {};
        
        \node[redStone] (t04) at (0, -2.8) {};
        \node[empty] (t14) [right=of t04] {};
        \node[blueStone] (t24) [right=of t14] {};
        \node[empty] (t34) [right=of t24] {};
        \node[empty] (t44) [right=of t34] {};
        \node[redCan] (t54) [right=of t44] {};
        \node[blueStone] (t64) [right=of t54] {};
        \node[nix] (t74) [right=of t64] {};

        \node[nix] (t05) at (-.5, -3.5) {};
        \node[empty] (t15) [right=of t05] {};
        \node[redCan] (t25) [right=of t15] {};
        \node[empty] (t35) [right=of t25] {};
        \node[redCan] (t45) [right=of t35] {};
        \node[blueStone] (t55) [right=of t45] {};
        \node[empty] (t65) [right=of t55] {};
        \node[redStone] (t75) [right=of t65] {};

        \node[blueStone] (t06) at (0, -4.2) {};
        \node[empty] (t16) [right=of t06] {};
        \node[empty] (t26) [right=of t16] {};
        \node[blueStone] (t36) [right=of t26] {};
        \node[empty] (t46) [right=of t36] {};
        \node[redCan] (t56) [right=of t46] {};
        \node[empty] (t66) [right=of t56] {};
        \node[nix] (t76) [right=of t66] {};

        \node[nix] (t07) at (-.5, -4.9) {};
        \node[redStone] (t17) [right=of t07] {};
        \node[blueStone] (t27) [right=of t17] {};
        \node[redStone] (t37) [right=of t27] {};
        \node[nix] (t47) [right=of t37] {};
        \node[nix] (t57) [right=of t47] {};
        \node[nix] (t67) [right=of t57] {};
        \node[blueStone] (t77) [right=of t67] {};
    \end{tikzpicture}\end{center}
    \caption{REDPLAYS gadget.  Blue cannot play here at all.  The vertices indicated with red hatching form a maximum independent set, so Red can move 8 times.}
    \label{fig:redPlays}
\end{figure}

In addition, we can color in all those vertices red and use the resulting unplayable REDPLAYS gadget as a filler to cover blank tiles between other gadgets in the circuitry.  Moreover, we can use this as a square border to our (enormous, but polynomial) board.  The only changes that need to be made are that when placed along the top of the entire board, the middle X space along the top of the gadget needs to be colored blue; and when placed along the bottom, the middle X space along the bottom of the gadget needs to be colored blue.
%(NOTE: THERE's another version of these that don't need that extra thing.... should we just use those?)

%TODO: add Theorem environment.
\begin{theorem}[Main]
    \rsCol\ is \cclass{PSPACE}-complete on triangular grid graphs on rectangles where sides are multiples of 8.
\end{theorem}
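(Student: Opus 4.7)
The plan is to prove both containment and hardness. Containment in \cclass{PSPACE} is routine: a \rsCol\ play colors a previously uncolored vertex and can never uncolor one, so game length is bounded by the number of vertices, and the standard alternating minimax evaluation runs in polynomial space. The substance of the theorem is the reduction from \rsBCL.

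Given a \rsBCL\ instance (a planar circuit over \textsc{And}, \textsc{Or}, \textsc{Choice}, \textsc{Split}, \textsc{Variable}, and \textsc{Goal} vertices), I would first use a standard planar embedding into a rectangular grid, replacing each \rsBCL\ vertex with the corresponding gadget tile and routing edges between them with the Vertical, Horizontal, Right Turn, and Left Turn WIRE tiles. Any remaining $8\times 8$ cell of the enclosing rectangle is filled with a fully-red version of the REDPLAYS tile (a pure blocker), and the entire rectangle is framed by a REDPLAYS border with the blue modifications to the top and bottom rows described at the end of the construction. Padding the rectangle to make both sides multiples of $8$ is free, and the number of tiles is polynomial in the \rsBCL\ instance, so the construction is polynomial.

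The correctness argument is a tight parity/counting invariant. Let $B$ be the total number of moves Blue must make across all non-REDPLAYS gadgets (one per \textsc{Variable}, four per Vertical WIRE, five per Horizontal WIRE, three per Right Turn, four per Left Turn, six per \textsc{And}, six per \textsc{Or}, seven per \textsc{Choice}, six per \textsc{Split}, two in the \textsc{Goal}). Choose the number of REDPLAYS tiles, together with optional pre-colored red vertices in one of them, so that Red's total available supply of moves equals $B-1$. The \textsc{Variable} gadgets are the only places where both players can play, so the first $x$ turns are spent there; since $x$ is even, play alternates back to Blue, after which Red is forced into REDPLAYS tiles and Blue into circuit tiles. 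I would then argue that every deviation from the ``obvious'' play costs at least one move: at any T--F pair Blue gains nothing by playing F if T is available, the marked `x' spots in the \textsc{And} and \textsc{Split} tiles strictly block two future required moves, and any Red play outside REDPLAYS or \textsc{Variable} donates a tempo to Blue. Hence Blue wins iff Blue can legally complete every required play, i.e., can reach the G-vertex in the \textsc{Goal}.

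The last step is the local verification of each gadget: that the labelled T--F pathways really do implement the claimed Boolean function, given the surrounding colored stones shown in Figures \ref{fig:variableGadget}--\ref{fig:splitGadget}, and that all boundary interface vertices behave as the template in Figure \ref{fig:gadgetTemplate} demands. I expect the main obstacle to be the \textsc{And} and \textsc{Split} tiles, where I must show both that a false input actually severs the internal T-path and that with a true input Blue can still reach every required T without stepping on the forbidden `x'. The \textsc{Choice} tile needs the complementary check that the two outputs are mutually exclusive even when the input is true, which is enforced by the central T--F--T--F geometry. Once these case analyses are done, chaining them along the wires yields ``first player wins the \rsBCL\ instance iff Blue wins the constructed \rsCol\ position,'' completing the reduction and hence the theorem.
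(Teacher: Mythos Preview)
Your proposal is correct and follows essentially the same approach as the paper: \cclass{PSPACE} membership via the bounded game-tree argument, and hardness via the gadget-based reduction from \rsBCL\ already laid out in the preceding sections. You in fact spell out the counting/parity invariant and the deviation analysis in more detail than the paper's own proof paragraph does; the only slip is that Blue must play \emph{twice} in each \textsc{Variable} tile (once on T/F and once on an $o$), not once, so your $B$ count is off by $x$, but this is a bookkeeping fix and does not affect the argument.
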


\begin{proof}
    To prove \cclass{PSPACE}-hardness, we reduce from \ruleset{Bounded 2-Player Constraint Logic} (\rsBCL) \cite{DBLP:books/daglib/0023750}.  The gadgets described above create the necessary pieces to complete the reduction from required \rsBCL\ vertices.  The gadgets each occupy $8 \times 8$ grids, so the resulting triangular grid-graph will have sides that are multiples of 8.

    \rsCol\ is in \cclass{PSPACE} because the height of the game tree is bounded by the number of vertices in the graph and the number of options from each position (fanout) is bounded by the same number.  Thus, the game can be solved in \cclass{PSPACE}.
\end{proof}

\subsection{Reachability}

The games generated by our reduction are not necessarily reachable from an empty grid, because they may not have the same number of red and blue-colored vertices at the beginning: Red has more plays made in the gadgets.  In order to solve that problem, we add another gadget that can be used to make up that difference.  These ``makeup'' tiles, as shown in Figure \ref{fig:makeup}, are unplayable tiles that have three more blue-colored vertices than red.  

We can use these to cover the difference between plays.  If that difference is not divisible by 3, we can include additional completely-filled-in REDPLAYS gadgets to reach the correct difference.  Now these starting boards have the same number of red and blue-colored vertices and is thus reachable in a real game from the empty grid.

%ISSUE: we don't show that we can get the correct amount to create a rectangle.  Kyle: I don't think so!

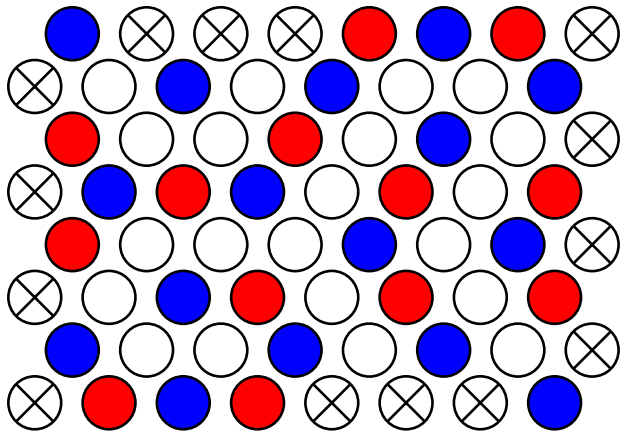
\begin{figure}[h!]
    \begin{center}
    \begin{tikzpicture}[node distance = .25cm, minimum size = .5cm, inner sep = .07cm, ultra thick]
        \node[blueStone] (t00) {};
        \node[nix] (t10) [right=of t00] {};
        \node[nix] (t20) [right=of t10] {};
        \node[nix] (t30) [right=of t20] {};
        \node[redStone] (t40) [right=of t30] {};
        \node[blueStone] (t50) [right=of t40] {};
        \node[redStone] (t60) [right=of t50] {};
        \node[nix] (t70) [right=of t60] {};

        \node[nix] (t01) at (-.5, -.7) {};
        \node[empty] (t11) [right=of t01] {};
        \node[blueStone] (t21) [right=of t11] {};
        \node[empty] (t31) [right=of t21] {};
        \node[blueStone] (t41) [right=of t31] {};
        \node[empty] (t51) [right=of t41] {};
        \node[empty] (t61) [right=of t51] {};
        \node[blueStone] (t71) [right=of t61] {};
        
        \node[redStone] (t02) at (0, -1.4) {};
        \node[empty] (t12) [right=of t02] {};
        \node[empty] (t22) [right=of t12] {};
        \node[redStone] (t32) [right=of t22] {};
        \node[empty] (t42) [right=of t32] {};
        \node[blueStone] (t52) [right=of t42] {};
        \node[empty] (t62) [right=of t52] {};
        \node[nix] (t72) [right=of t62] {};

        \node[nix] (t03) at (-.5, -2.1) {};
        \node[blueStone] (t13) [right=of t03] {};
        \node[redStone] (t23) [right=of t13] {};
        \node[blueStone] (t33) [right=of t23] {};
        \node[empty] (t43) [right=of t33] {};
        \node[redStone] (t53) [right=of t43] {};
        \node[empty] (t63) [right=of t53] {};
        \node[redStone] (t73) [right=of t63] {};
        
        \node[redStone] (t04) at (0, -2.8) {};
        \node[empty] (t14) [right=of t04] {};
        \node[empty] (t24) [right=of t14] {};
        \node[empty] (t34) [right=of t24] {};
        \node[blueStone] (t44) [right=of t34] {};
        \node[empty] (t54) [right=of t44] {};
        \node[blueStone] (t64) [right=of t54] {};
        \node[nix] (t74) [right=of t64] {};

        \node[nix] (t05) at (-.5, -3.5) {};
        \node[empty] (t15) [right=of t05] {};
        \node[blueStone] (t25) [right=of t15] {};
        \node[redStone] (t35) [right=of t25] {};
        \node[empty] (t45) [right=of t35] {};
        \node[redStone] (t55) [right=of t45] {};
        \node[empty] (t65) [right=of t55] {};
        \node[redStone] (t75) [right=of t65] {};

        \node[blueStone] (t06) at (0, -4.2) {};
        \node[empty] (t16) [right=of t06] {};
        \node[empty] (t26) [right=of t16] {};
        \node[blueStone] (t36) [right=of t26] {};
        \node[empty] (t46) [right=of t36] {};
        \node[blueStone] (t56) [right=of t46] {};
        \node[empty] (t66) [right=of t56] {};
        \node[nix] (t76) [right=of t66] {};

        \node[nix] (t07) at (-.5, -4.9) {};
        \node[redStone] (t17) [right=of t07] {};
        \node[blueStone] (t27) [right=of t17] {};
        \node[redStone] (t37) [right=of t27] {};
        \node[nix] (t47) [right=of t37] {};
        \node[nix] (t57) [right=of t47] {};
        \node[nix] (t67) [right=of t57] {};
        \node[blueStone] (t77) [right=of t67] {};
    \end{tikzpicture}\end{center}
    \caption{A makeup tile, which has three more plays made by Blue than Red.  (No one can play in these tiles.)}
    \label{fig:makeup}
\end{figure}

Additionally, instead of using REDPLAYS gadgets to fill in the border and extra space, we can use EVEN gadgets, shown in Figure \ref{fig:even} as unplayable tiles that have 14 blue and 14 red vertices.

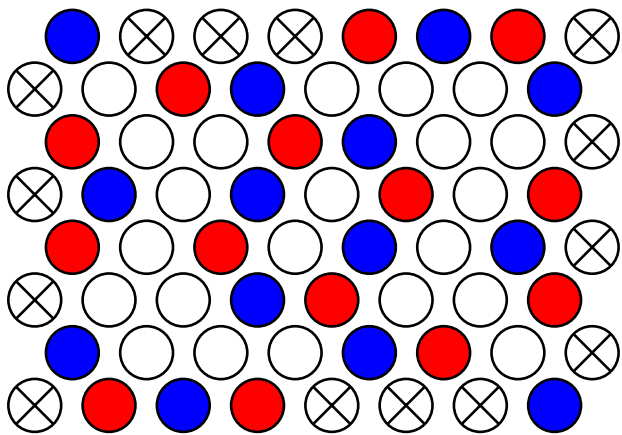
\begin{figure}[h!]
    \begin{center}
    \begin{tikzpicture}[node distance = .25cm, minimum size = .5cm, inner sep = .07cm, ultra thick]
        \node[blueStone] (t00) {};
        \node[nix] (t10) [right=of t00] {};
        \node[nix] (t20) [right=of t10] {};
        \node[nix] (t30) [right=of t20] {};
        \node[redStone] (t40) [right=of t30] {};
        \node[blueStone] (t50) [right=of t40] {};
        \node[redStone] (t60) [right=of t50] {};
        \node[nix] (t70) [right=of t60] {};

        \node[nix] (t01) at (-.5, -.7) {};
        \node[empty] (t11) [right=of t01] {};
        \node[redStone] (t21) [right=of t11] {};
        \node[blueStone] (t31) [right=of t21] {};
        \node[empty] (t41) [right=of t31] {};
        \node[empty] (t51) [right=of t41] {};
        \node[empty] (t61) [right=of t51] {};
        \node[blueStone] (t71) [right=of t61] {};
        
        \node[redStone] (t02) at (0, -1.4) {};
        \node[empty] (t12) [right=of t02] {};
        \node[empty] (t22) [right=of t12] {};
        \node[redStone] (t32) [right=of t22] {};
        \node[blueStone] (t42) [right=of t32] {};
        \node[empty] (t52) [right=of t42] {};
        \node[empty] (t62) [right=of t52] {};
        \node[nix] (t72) [right=of t62] {};

        \node[nix] (t03) at (-.5, -2.1) {};
        \node[blueStone] (t13) [right=of t03] {};
        \node[empty] (t23) [right=of t13] {};
        \node[blueStone] (t33) [right=of t23] {};
        \node[empty] (t43) [right=of t33] {};
        \node[redStone] (t53) [right=of t43] {};
        \node[empty] (t63) [right=of t53] {};
        \node[redStone] (t73) [right=of t63] {};
        
        \node[redStone] (t04) at (0, -2.8) {};
        \node[empty] (t14) [right=of t04] {};
        \node[redStone] (t24) [right=of t14] {};
        \node[empty] (t34) [right=of t24] {};
        \node[blueStone] (t44) [right=of t34] {};
        \node[empty] (t54) [right=of t44] {};
        \node[blueStone] (t64) [right=of t54] {};
        \node[nix] (t74) [right=of t64] {};

        \node[nix] (t05) at (-.5, -3.5) {};
        \node[empty] (t15) [right=of t05] {};
        \node[empty] (t25) [right=of t15] {};
        \node[blueStone] (t35) [right=of t25] {};
        \node[redStone] (t45) [right=of t35] {};
        \node[empty] (t55) [right=of t45] {};
        \node[empty] (t65) [right=of t55] {};
        \node[redStone] (t75) [right=of t65] {};

        \node[blueStone] (t06) at (0, -4.2) {};
        \node[empty] (t16) [right=of t06] {};
        \node[empty] (t26) [right=of t16] {};
        \node[empty] (t36) [right=of t26] {};
        \node[blueStone] (t46) [right=of t36] {};
        \node[redStone] (t56) [right=of t46] {};
        \node[empty] (t66) [right=of t56] {};
        \node[nix] (t76) [right=of t66] {};

        \node[nix] (t07) at (-.5, -4.9) {};
        \node[redStone] (t17) [right=of t07] {};
        \node[blueStone] (t27) [right=of t17] {};
        \node[redStone] (t37) [right=of t27] {};
        \node[nix] (t47) [right=of t37] {};
        \node[nix] (t57) [right=of t47] {};
        \node[nix] (t67) [right=of t57] {};
        \node[blueStone] (t77) [right=of t67] {};
    \end{tikzpicture}\end{center}
    \caption{An EVEN tile, unplayable with 14 blue and 14 red vertices.  These can be used to fill in the grid without altering the balance in the number of moves each player has made.  Additionally, no modifications must be made when placing along the border.}
    \label{fig:even}
\end{figure}

The process to perform the reduction now has a few more steps:

\begin{enumerate}
    \item Build the gadget tiles from \rsBCL and connect them for the main construction.
    \item Add REDPLAYS tiles so that Blue has exactly one more play than Red if they are able to play twice on the GOAL gadget.
    \item Add MAKEUP tiles and filled-in REDPLAYS tiles until the number of blue and red vertices is equivalent.
    \item Fill in any remaining holes in the grid with EVEN tiles and use them to create a border around the entire grid.
\end{enumerate}

\section*{Acknowledgements}

Thank you to our families for understanding that we are very excited about studying a basic graph game.

\bibliographystyle{plainurl}

\end{document}